\numberwithin{equation}{section}
\newcommand{\tab}{\hspace*{2em}}              
\newcommand{\indentitem}{\setlength\itemindent{25pt}} 
\newcommand{\bm}[1]{\mathbf{#1}}                     
\newcommand{\diadt}[2]{t^{(#1)}_{#2}}                
\newcommand{\vechat}[1]{\hat{\pmb{#1}}}                 
\theoremstyle{plain}
\newtheorem{thm}{Theorem}[section]
\newtheorem{lem}[thm]{Lemma}
\newtheorem{prop}[thm]{Proposition}
\theoremstyle{plain}
\newtheorem{assumption}[thm]{Assumption}      
\theoremstyle{definition}
\theoremstyle{definition}
\theoremstyle{remark}
\newtheorem{rem}{Remark}[section]
\title{\textbf{Alpha Representation For Active Portfolio Management and High Frequency Trading In Seemingly Efficient Markets}}
\author{ Godfrey Cadogan \thanks{ Corresponding address: Institute for Innovation and Technology Management, Ted Rogers School of Management, Ryerson University, 575 Bay, Toronto, ON M5G 2C5; e-mail: \textcolor[rgb]{0.00,0.00,1.00}{\href{mailto:godfrey.cadogan@ryerson.ca}{godfrey.cadogan@ryerson.ca}}. I thank Bonnie K. Ray of IBM Watson Research Center, Program Chair, at the Joint Statistical Meeting 2011, Business and Economics Section  for her efforts in facilitating this work.  I am grateful to Steve Slezak, Victor K. Ng, and Gautam Kaul for introducing me to this topic, in its variegated forms, during Financial Economics seminars held at the University of Michigan many years ago. An expanded version of this paper with extensive literature review, and detailed proofs is available from the author upon request. Research support from the Institute for Innovation and Technology Management is gratefully acknowledged. All errors which may remain are my own.}}
\begin{document}

\maketitle
\begin{abstract}
We introduce  a trade strategy representation theorem for performance measurement and portable alpha in high frequency trading, by embedding a robust trading algorithm that describe portfolio manager market timing behavior,  in a canonical multifactor asset pricing model. First, we present a spectral test for market timing based on behavioral transformation of the hedge factors design matrix. Second, we find that the typical trade strategy process is a local martingale with a background driving Brownian bridge that mimics portfolio manager price reversal strategies. Third, we show that equilibrium asset pricing models like the CAPM exists on a set with P-measure zero. So that excess returns, i.e. positive alpha,  relative to a benchmark index is robust to no arbitrage pricing in turbulent capital markets. Fourth, the path properties of alpha are such that it is positive between suitably chosen stopping times for trading. Fifth, we demonstrate how, and why,  econometric tests of portfolio performance tend to under report positive alpha.
\\
\\
\emph{Keywords}:     market timing; empirical alpha process; unobserved portfolio strategies; martingale system; behavioural finance; high frequency trading; Brownian bridge; Jensen's alpha; portable alpha
\\
\emph{JEL Classification Codes:} C02, G12, G13
\end{abstract}

\section{Introduction\label{sec:Intro}}

The problem posed is one in which a portfolio manager ("PM") wants to increase portfolio alpha--the returns on her portfolio, over and above a benchmark or market portfolio. To do so [s]he alters the betas\footnote{See e.g., \cite{GrundyMalkiel1996} for viability of beta as a useful metric for covariance with benchmarks. \cite{Grinold1993} provides excellent exposition on the versatility of beta seperate from its use in the CAPM introduced by \cite{Sharpe1964}, \emph{inter alios}.} of the portfolio in anticipation of market movements by augmenting a benchmark model with hedge factors\footnote{See \cite{FamaFrench1996}; and \cite[pg.~276]{FungHsieh1997}}--which includes but is not limited to revising asset allocation or readjusting portfolio weights within an asset class. In other words, altered betas represent the managers dynamic trading strategy\footnote{Implicit in this assessment is the portfolio manager's response to good news or bad news accordingly--about assets in her portfolio--to exploit a so called leverage effect. See e.g., \cite{Black1976, BraunNelsonSunier1995}.}. Conceptually, the allocation of assets in the benchmark is ``fixed" but hedge factors are stochastic\footnote{See e.g., \cite[pg. ~10]{Jensen1967}.}--at least for so called ''portable alpha"\footnote{ See  \cite[pg.~78-79]{KungPohlman2004}. To wit, the portfolio may be ''market neutral' since benchmark and or market risk is hedged away..}.

This paper's contribution to behavioural finance, and the gargantuan market timing literature, stems from its reconciliation of active portfolio management with efficient markets when portfolio strategy or investment style is unobservable\footnote{See e.g., \cite{HenrikMerton1981, GrinblattTitman1989, FersonSchadt1996, MamayskySpiegelZhang2008, KacpercSialmZhang2008}.}. It employs asymptotic theory to identify an empirical portfolio alpha process with dynamic portfolio adjustments\footnote{See e.g., \cite{Urstadt2010}.} that reflect managerial strategy via martingale system equations that portend algorithmic trading. Additionally, it proves that the measurable sets for portfolio manager market timing ability are  much larger than those proffered in the extant literature which tests for timing ability via statistical significance of convex payoff structure(s)\footnote{See e.g. \cite{TreynorMazuy1966, TreynorBlack1973, Merton1981, BollenBusse2001}. Cf. \cite{GrinblattTitman1989, FersonSchadt1996}.}. Accordingly, we propose a new and simple test for market timing ability based on the spectral circle induced by a behavioural transformation of the hedge factor matrix.

\tab The paper proceeds as follows. In \autoref{sec:CanonicalAssetPricingModel} we formally introduce our model. Whereupon we summarize our representation theory result in Theorem \autoref{thm:GammaRepresentationTheorem}. Our spectral test for market timing is presented in Proposition \autoref{prop:SpectralTestForMarketTiming}. In \autoref{sec:ApplSingleFacModel} we apply our theory to the ubiquitous CAPM to provide analytics about Jensen's alpha. The main result there is Theorem \autoref{thm:AlphaPath} on the path process of positive alpha.

\section{The Canonical Linear Asset Pricing Model}\label{sec:CanonicalAssetPricingModel}

Let
\begin{equation}\label{eq:CanonicalAugCAPM}
   \bm{y}=X\pmb{\delta}+Z\pmb{\gamma}+\pmb{\epsilon}
\end{equation}
be the canonical hedge factor model, i.e., augmented capital asset pricing model (CAPM), for a portfolio comprised of: $X$--a matrix of returns from \emph{benchmark assets}\footnote{See \cite[pp. ~88-89]{GrinoldKahn2000} for explanation of benchmarking concept.}; and $Z$--a matrix of returns from \emph{hedge factors}\footnote{Arguably the most popular augmented CAPM-type benchmarking model is \cite{FamaFrench1993} (3-factor model includes; benchmark; small minus big stock returns (SMB); high minus low book to market stock returns (HML)). See \cite{Nohel2010} for a literature  review.} mimicking derivatives. The portfolio \emph{beta} is given by the row vector $\pmb{\beta}^T= (\pmb{\delta}^T\;\;\pmb{\gamma}^T)$ and $\pmb{\epsilon}$ is a column vector of idiosyncratic error terms\footnote{Column vectors are in bold print. The superscript T corresponds to transposition of a vector or matrix accordingly.}. The hedge factor strategy is embodied by $Z$. Thus, \emph{modulo} idiosyncratic error, our \emph{portfolio alpha} is given by
\begin{equation}\label{eq:PortfolioAlpha}
    \pmb{\alpha} = Z\pmb{\gamma}
\end{equation}
Whereupon $\pmb{\gamma}$ is hedge factor exposure sensitivity--it represents the \emph{trading strategy} of the portfolio manager\footnote{\cite[pg.~2]{JarrowProtter2010} identifies the constant intercept in a multifactor model as portfolio alpha. Our approach is tantamount to explaining that intercept with $Z$. See \cite[pg.~17-18]{AveryChevalierZeckhauser2011}.}. Similarly, $\pmb{\delta}$ is benchmark exposure sensitivity\footnote{See e.g. \cite[pg.~68]{TreynorBlack1973} for further interpretation and analytics.}. We would like to know what impact inclusion of $Z$ has on the model, including but not limited to its impact on returns $\bm{y}$\footnote{\cite[pg. ~5]{MacKinlayPastor1998} posited a similar parametrization except that they used a \cite{JamesStein1961} type estimation procedure to evaluate the impact of a missing factor on returns.}. For example, if inclusion of $Z$ has no impact, then $\pmb{\gamma}$ is statistically zero: our portfolio manager's choice of $Z$ is not generating \emph{alpha}. In the sequel our analyses are based on the following
\begin{assumption}[Filtered probability space]\label{assum:FilterProbSpace}
  $(\Omega,\mathcal{F},\mathbb{F},P)$. $\Omega$ is the sample space for states of nature; $\mathcal{F}$ is the $\sigma$-field of Borel measurable subsets of $\Omega$; $P$ is a probability measure defined on $\Omega$; and $\mathbb{F}=\{\mathcal{F}_s \subseteq \mathcal{F}_t \subseteq \mathcal{F};\;0\leq s < t < \infty\}$ is a filtration of sub $\sigma$-fields of $\mathcal{F}$.
\end{assumption}

\begin{assumption}
   $y:\Omega\rightarrow\mathbb{R}\backslash\{\{-\infty\},\{\infty\}\}$
\end{assumption}

\begin{assumption}
   $\text{P}-\lim_{T\rightarrow\infty}\frac{\sum^T_{t=1}x_tz_t}{T} =0$
\end{assumption}

\begin{assumption}\label{assum:HedgeFactorMatrix}
   The hedge factor matrix $Z(t,\omega)=(z_{ij}(t,\omega))\in  L^2(\Omega,\mathcal{F},P)$. Thus
   \begin{enumerate}
      {\indentitem \item[i.] $(z_{ij}(t,\omega)) is \mathcal{B}[0,\infty)\otimes \mathcal{F}$ measurable for the $\sigma$-field of Borel sets $\mathcal{B}$ generated on $[0,\infty)$.}
      {\indentitem \item[ii.]  $z_{ij}(t,\omega)$ is $\mathcal{F}_t$-adapted.}
      {\indentitem \item[iii.] $E[z^2_{ij}(t,\omega)]<\infty$.}
    \end{enumerate}
\end{assumption}

\begin{assumption}
   Markets are liquid so trades are executed at given prices.
\end{assumption}

\begin{assumption}
   Market microstructure effects are negligible.
\end{assumption}

\begin{assumption}
   $E[\epsilon] = 0,\quad E[\epsilon^2] < \infty$
\end{assumption}

\begin{assumption}
   $\pmb{\beta}$ is time varying.
\end{assumption}

To facilitate our asymptotic theory of portfolio alpha, we use a canonical dyadic partition of the unit interval $[0,1]$ starting at an arbitrary time $t=t_0$, on function space $C[0,1]$\footnote{Technical points involving Skorokhod space $D[0,1]$ are ignored here.}. In particular, $\prod^{(n)} = \{\diadt{n}{0} , \diadt{n}{1} ,\dotsc ,\diadt{n}{{m_n}}\}$ is a dyadic partition $\diadt{n}{j}=j.2^{-n}$ for $j=1 \ldots 2^n$. Let $y_{\diadt{n}{j}}$ be the augmented portfolio return at time $\diadt{n}{t}$; $\bm{x}^T_{\diadt{n}{j}}$ be the corresponding row vector of returns on the benchmark assets; and $\bm{z}^T_{\diadt{n}{j}}$ be the corresponding row vector of returns on hedge factors in the model. Let $\pmb{\delta}_{\diadt{n}{j}}$ and $\pmb{\gamma}_{\diadt{n}{j}}$ be the $\diadt{n}{j}$-th period coefficients, and
\begin{align}
   \Delta\pmb{\delta}_{\diadt{n}{{j+1}}} &= \pmb{\delta}_{\diadt{n}{{j+1}}}-\pmb{\delta}_{\diadt{n}{j}} , \quad
   \Delta\pmb{\gamma}_{\diadt{n}{{j+1}}} = \pmb{\gamma}_{\diadt{n}{{j+1}}}-\pmb{\gamma}_{\diadt{n}{j}}
\end{align}
be the corresponding change in model coefficients due to an additional observation\footnote{\cite[pp.~8-9]{FulkJordanSmith2010} used a similar parametrization to decompose portfolio returns into active and passive components.}.
\begin{assumption}
     $\Delta\pmb{\delta}_{\diadt{n}{{j+1}}}$ and $\Delta\pmb{\gamma}_{\diadt{n}{{j+1}}}$ are $\mathcal{F}_{{\diadt{n}{{j+1}}}^{-}}$-measurable.
\end{assumption}
To isolate the impact of the $j+1$-th period observation on the model we write
\begin{equation} \label{eq:AugBlockModel}
   \begin{pmatrix} 
      \bm{y}_{\diadt{n}{j}}\\ \hdots \\ y_{\diadt{n}{{j+1}}} 
   \end{pmatrix}
   =                             
   \begin{pmatrix} 
      X_{\diadt{n}{j}} &\vdots &Z_{\diadt{n}{j}} \\
      \hdotsfor{3} \\
      \pmb{x}^T_{\diadt{n}{{j+1}}} &\vdots &\pmb{z}^T_{\diadt{n}{{j+1}}}
   \end{pmatrix}
   \begin{pmatrix} 
      \hat{\pmb{\delta}}_{\diadt{n}{j+1}} \\
      \hdotsfor{1} \\
      \hat{\pmb{\gamma}}_{\diadt{n}{j+1}}
   \end{pmatrix}
   +                             
   \begin{pmatrix} 
      \bm{e}_{\diadt{n}{j}} \\
      \hdotsfor{1} \\
      e_{\diadt{n}{{j+1}}}
   \end{pmatrix}
\end{equation}
where $e$ is the sample estimate of $\epsilon$. In which case we get the linear relation
\begin{align} 
   \bm{y}_{\diadt{n}{j}} &=X\vechat{\delta}_{\diadt{n}{{j+1}}}+Z\vechat{\gamma}_{\diadt{n}{{j+1}}}+\bm{e}_{\diadt{n}{{j+1}}} \label{eq:AugExpostErr}\\
   y_{\diadt{n}{{j+1}}} &= \bm{x}^T_{\diadt{n}{{j+1}}}\hat{\delta}_{\diadt{n}{{j+1}}}+\bm{z}^T_{\diadt{n}{{j+1}}}\hat{\gamma}_{\diadt{n}{{j+1}}}+e_{\diadt{n}{{j+1}}} \label{eq:AugForecast}
\end{align}
where $\bm{x}^T_{\diadt{n}{{j+1}}}$ and $\bm{z}^T_{\diadt{n}{{j+1}}}$ are \emph{row vectors}. So that if there are $m$ assets in the benchmark portfolio, and $p$ hedge factors/assets, then $X_{\diadt{n}{j}}=[\bm{x}_{\diadt{n}{1}}\dotsc\bm{x}_{\diadt{n}{j}}]$ is a $j\times m$ matrix, and $Z_{\diadt{n}{j}}=[\bm{z}_{\diadt{n}{1}}\dotsc\bm{z}_{\diadt{n}{j}}]$ is a $j\times p$ matrix. An additional observation appends a row vector to each matrix accordingly\footnote{In the sequel we suppress the time subscript for the $X_{\diadt{n}{j}}$ and $Z_{\diadt{n}{j}}$ matrices, and write $X$ annd $Z$  for notational convenience. However, we reserve the right to invoke the time subscript as necessary..}. So that $Z$ is really a progressively measurable $j \times p$ matrix process for $j=0,1,\dotsc,2^n$.

\subsection{Behavioural Heuristics On Altering Beta}\label{subsec:HeuristicsOnAlterBeta}

\tab Technically,  $\bm{z}^T_{\diadt{n}{{j+1}}}(\omega)$ is not $\mathcal{F}_{\diadt{n}{j}}$-adapted. That is, it cannot be determined solely from information in $\mathcal{F}_{\diadt{n}{j}}$. The portfolio manager must be ``clairvoyant" and find some algebraic number\footnote{See e.g., \cite[pg.~88]{Clark1971} for definition of algebraic number and related concepts introduced here.} in $\mathcal{F}_{\diadt{n}{{j+1}}}$. The gist of \cite{Cadogan2011} is that implied volatility ($\sigma$) from options prices is such a ``clairvoyant" algebraic number\footnote{See also, \cite{BakshiPanayotovSkoulakis2010} who showed that forward looking volatility, i.e. an algebraic number, from options market have predictive power for asset returns. At a more technical level, \cite[pg.~10]{Myneni1992} used martingale theory from \cite[pg.~135, 74(b)]{DellacherieMeyer1982} to advocate for the existence of \emph{dual predictable projection} of processes with integrable variation. To wit, if $Z$ is convex--as hypothesized, then it satisfies the dual predictable projection criterion.}. Therefore, for some closed class of polynomials $\mathcal{P}$, and polynomials $g,\;h\in\mathcal{P}$, the hedge factor(s) $\bm{z}^T_{\diadt{n}{{j+1}}}$ can be expressed as a polynomial $g(\sigma)$ for $\sigma\in\mathcal{F}_{\diadt{n}{{j+1}}}$ with coefficients drawn from $\mathcal{F}_{\diadt{n}{j}}$. In other words, returns forecast must be based on forward $[g(\sigma)]$ and backward $[h(y_{\diadt{n}{1}},\;y_{\diadt{n}{2}},\;\dotsc,\;y_{\diadt{n}{j}})]$ looking variables based on derivative pricing.   So that
\begin{align}
      y_{\diadt{n}{{j+1}}}  &= g(\sigma) + h(y_{\diadt{n}{1}},\;y_{\diadt{n}{2}},\;\dotsc,\;y_{\diadt{n}{j}})+\epsilon_{\diadt{n}{{j+1}}}
\end{align}
In which case for $\pmb{x}_{\diadt{n}{j}}$ fixed in \ref{eq:AugForecast}, $\pmb{z}_{\diadt{n}{{j+1}}}=g(\sigma)$ is the contribution of new information to returns, $y_{\diadt{n}{{j+1}}}$, after parameter updates\footnote{See e.g., \cite{AdmatiPfleiderer1988} for evolution of trade patterns and information flows.}. In a nutshell, $\pmb{z}_{\diadt{n}{{j+1}}}$ is predictable\footnote{See \cite[pg.~21]{KaratzasShreve1991}.}; thus paving the way for its use in martingale transform equations. These results are summarized in the following
\begin{lem}[Predictable hedge factors]\label{lem:PredictableHedgeFactors}~\\
   Let $\bm{z}^T_{\diadt{n}{{j+1}}}$ be a vector of returns isomorphic to the terminal payoff of a contingent claim, and $\sigma$ be an algebraic number in $\mathcal{F}_{\diadt{n}{{j+1}}}$. Let $\mathcal{P}$ be the class of closed polynomials with coefficients in $\mathcal{F}_{\diadt{n}{j}}$. Then $\bm{z}^T_{\diadt{n}{{j+1}}}=g(\sigma)$ is predictable.
   \begin{flushright}
      $\Box$
   \end{flushright}
\end{lem}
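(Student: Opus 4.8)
The plan is to establish predictability directly from its definition, by showing that the value $\bm{z}^T_{\diadt{n}{{j+1}}}=g(\sigma)$ is measurable with respect to the left-limit field $\mathcal{F}_{\diadt{n}{{j+1}}^-}$ --- the very measurability already imposed on the coefficient increments $\Delta\pmb{\delta}_{\diadt{n}{{j+1}}}$ and $\Delta\pmb{\gamma}_{\diadt{n}{{j+1}}}$. First I would recall from \cite[pg.~21]{KaratzasShreve1991} that the predictable $\sigma$-field on $[0,\infty)\times\Omega$ is generated by the left-continuous adapted processes, equivalently by the simple processes $\xi\,\mathbf{1}_{(s,t]}$ with $\xi\in\mathcal{F}_s$; for a process sampled along the dyadic partition $\prod^{(n)}$ this reduces to requiring that the value at each node $\diadt{n}{{j+1}}$ be $\mathcal{F}_{\diadt{n}{{j+1}}^-}$-measurable. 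This reframes the lemma as a single measurability statement.

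Next I would decompose the polynomial as $g(\sigma)=\sum_{k=0}^{d}c_k\,\sigma^k$. By hypothesis $\mathcal{P}$ is the class of closed polynomials with coefficients in $\mathcal{F}_{\diadt{n}{j}}$, so each $c_k$ is $\mathcal{F}_{\diadt{n}{j}}$-measurable; since the filtration is increasing and $\diadt{n}{j}<\diadt{n}{{j+1}}$, we have $\mathcal{F}_{\diadt{n}{j}}\subseteq\mathcal{F}_{\diadt{n}{{j+1}}^-}$, so every $c_k$ is a fortiori $\mathcal{F}_{\diadt{n}{{j+1}}^-}$-measurable. The coefficients thus present no difficulty.

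The hard part, and the crux of the argument, is the measurability of $\sigma$ itself: by construction $\sigma\in\mathcal{F}_{\diadt{n}{{j+1}}}$, and \autoref{subsec:HeuristicsOnAlterBeta} stresses that $\bm{z}^T_{\diadt{n}{{j+1}}}$ is deliberately \emph{not} $\mathcal{F}_{\diadt{n}{j}}$-adapted, so predictability cannot be read off from one-step-ahead measurability and one must squeeze $\sigma$ into the strictly smaller field $\mathcal{F}_{\diadt{n}{{j+1}}^-}$. Here I would exploit that $\sigma$ is an \emph{algebraic number}: as the implied volatility obtained by inverting the contingent-claim pricing map, it is the root of a fixed defining relation whose inputs are disclosed by the left-continuous stream of option quotes as $t\uparrow\diadt{n}{{j+1}}$. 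I would make this rigorous by writing $\sigma=\lim_{t\uparrow\diadt{n}{{j+1}}}\sigma_t$ for an adapted family $\sigma_t$ of approximate roots, so that $\sigma$ is $\mathcal{F}_{\diadt{n}{{j+1}}^-}$-measurable even though it stays forward-looking relative to $\mathcal{F}_{\diadt{n}{j}}$; this is precisely the step where the ``clairvoyance'' heuristic has to be cashed out as a genuine left-limit, and it is where the argument is most exposed.

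Finally, since a polynomial is a continuous, hence Borel, function of its coefficients and its argument, $g(\sigma)=\sum_{k=0}^{d}c_k\,\sigma^k$ is produced from the $\mathcal{F}_{\diadt{n}{{j+1}}^-}$-measurable quantities $c_0,\dotsc,c_d,\sigma$ by finitely many measurable operations and is therefore itself $\mathcal{F}_{\diadt{n}{{j+1}}^-}$-measurable. By the reframing of the first step this yields predictability of $\bm{z}^T_{\diadt{n}{{j+1}}}=g(\sigma)$, establishing the lemma.
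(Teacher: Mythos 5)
Your argument has a genuine gap at precisely the step you flag as ``most exposed,'' and the flag does not repair it. The lemma's hypotheses place $\sigma$ only in $\mathcal{F}_{\diadt{n}{{j+1}}}$, and ``algebraic number'' is a property of the \emph{value} of $\sigma$ (it is a root of a polynomial relation), not a filtration-theoretic property: it carries no information about when $\sigma$ becomes observable, so it cannot convert $\mathcal{F}_{\diadt{n}{{j+1}}}$-measurability into $\mathcal{F}_{\diadt{n}{{j+1}}^{-}}$-measurability. Your device of writing $\sigma=\lim_{t\uparrow\diadt{n}{{j+1}}}\sigma_t$ for an adapted family of approximate roots presupposes that implied volatility is revealed by a left-continuous, adapted quote stream --- but no such process, and no left-continuity, appears anywhere in the hypotheses. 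That postulated family is not a consequence of anything stated; it is exactly equivalent to the conclusion you are trying to prove, so the proof is circular at its crux. Worse, the framing sits uneasily with how the lemma is used downstream: in the martingale system equation of Proposition \autoref{prop:SnellTheorem}, the transform $u_j$ multiplying $d_{j+1}$ must be $\mathcal{F}_{\diadt{n}{j}}$-measurable (discrete-time predictability means measurability at the \emph{previous} grid point, not with respect to the left field at the current one), and \autoref{subsec:HeuristicsOnAlterBeta} explicitly states that $\bm{z}^T_{\diadt{n}{{j+1}}}$ is \emph{not} $\mathcal{F}_{\diadt{n}{j}}$-adapted; so even a successful left-field argument would not deliver the notion of predictability the paper actually needs.

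For calibration: the paper itself offers no proof of this lemma --- it is closed with a box, followed by a remark citing \cite{Kassouf1969} for \emph{empirical} support, and its real justification is the preceding heuristic paragraph together with a footnote invoking the dual predictable projection of \cite[pg.~135, 74(b)]{DellacherieMeyer1982} via \cite{Myneni1992} for convex $Z$ with integrable variation. Note that this route, had it been developed, would yield only the \emph{existence of a predictable projection} of the hedge factor process, which is strictly weaker than the process itself being predictable. So the lemma is in effect a modelling assumption dressed as a result; you cannot be faulted for failing to reproduce a rigorous argument that is not in the paper, but your attempted repair introduces an unstated hypothesis rather than closing the hole. An honest version of your proof would either (i) add the left-limit representability of $\sigma$ as an explicit assumption, or (ii) weaken the conclusion to the dual-predictable-projection statement the paper's own footnote gestures at.
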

\begin{rem}
   \cite{Kassouf1969} provides empirical support for this lemma.
\end{rem}

The dispositive question here is how to alter the portfolio`s beta, i.e., forecast $\delta_{\diadt{n}{{j+1}}}$ and $\gamma_{\diadt{n}{{j+1}}}$, to maximize next period`s returns. The vector of returns is given by
\begin{align}
   \pmb{y}_{\diadt{n}{{j+1}}} &= [\bm{y}^T_{\diadt{n}{j}} : y^T_{\diadt{n}{{j+1}}}]^T.\;\;\text{Whereupon}\\
   \bm{y}_{\diadt{n}{j}} &=X\pmb{\delta}_{\diadt{n}{{j+1}}}+Z\pmb{\gamma}_{\diadt{n}{{j+1}}}+\pmb{\epsilon}_{\diadt{n}{j}}\label{eq:ExPostPortfolioRet}\\
                         &=X\pmb{\delta}_{\diadt{n}{j}} + \overbrace{X\Delta\pmb{\delta}_{\diadt{n}{{j+1}}}+Z\pmb{\gamma}_{\diadt{n}{{j+1}}}+\pmb{\epsilon}_{\diadt{n}{j}}}^{\text{\emph{ex post} tracking error}},\;\;\text{and}\label{eq:TrackingError}\\
   y_{\diadt{n}{{j+1}}} &= \bm{x}^T_{\diadt{n}{{j+1}}}\pmb{\delta}_{\diadt{n}{{j+1}}} + \overbrace{\bm{z}^T_{\diadt{n}{{j+1}}}\pmb{\gamma}_{\diadt{n}{{j+1}}}+\epsilon_{\diadt{n}{{j+1}}}}^{\text{\emph{ex ante} tracking error}}\label{eq:ExAntePortfolioRet}
\end{align}
Ideally, the portfolio manager would like tracking error to be zero as she tries to replicate the benchmark and or index in \ref{eq:TrackingError}. See e.g., \cite[pp.~676-677]{EltonGruberBrownGoetz2003}. See also, \cite[pg. ~49]{GrinoldKahn2000} who define ``tracking error" as ``how well the portfolio can track the benchmark". It is the ``active returns" on the portfolio. This is tantamount to imposing the following behavioral restrictions on the \emph{ex post} tracking error equation
\begin{align}
   X\Delta\pmb{\delta}_{\diadt{n}{{j+1}}}+Z\pmb{\gamma}_{\diadt{n}{{j+1}}}+\pmb{\epsilon}_{\diadt{n}{j}} &= 0\\
   \intertext{If the proportion of assets in the benchmark is fixed--technically this is a ''portable alpha" strategy, then}
   \Delta\pmb{\delta}_{\diadt{n}{{j+1}}} &= 0,\;\;\text{and}\\
   \hat{\pmb{\gamma}}^{\text{res}}_{\diadt{n}{{j+1}}} &= -\bigl(Z^TZ\bigr)^{-1}Z^T\pmb{\epsilon}_{\diadt{n}{j}}\label{eq:RestrictedGamma}\\
   \intertext{Thus, hedge factor exposure sensitivity plainly depends on, \emph{inter alia}, the behavior of $\pmb{\epsilon}_{\diadt{n}{j}}$. Consistent with our augmented model, define the projection matrices, see e.g., \cite[pp.~149-150]{Greene2003}}
   P_X &= X(X^TX)^{-1}X^T\\
   P_Z &= Z(Z^TZ)^{-1}Z^T\\
   M_X &= I-P_X\\
   M_Z &= I-P_Z\\
   \intertext{So that assuming that $X$ and $Z$ are uncorrelated with $\pmb{\epsilon}$ we have the unrestricted estimate, see \cite[pp.~121-122]{ChristophFersonGlassman1998},}
   \hat{\pmb{\gamma}}^{\text{unres}}_{\diadt{n}{{j+1}}} &= \bigl(Z^TM_XZ\bigr)^{-1}Z^TM_X\bm{y}_{\diadt{n}{j}}\label{eq:UnrestrictedGamma}\\
   \intertext{Our portfolio manager has superior market timing ability, see \cite[pg.~436]{FersonSchadt1996}, if}
   \bm{z}^T_{\diadt{n}{{j+1}}}\pmb{\gamma}^{\text{res}}_{\diadt{n}{{j+1}}}+\epsilon_{\diadt{n}{{j+1}}} & \geq 0\label{eq:SuperiorTimingAbility}\\
   \intertext{So we can rewrite \ref{eq:ExAntePortfolioRet} as follows}
   y_{\diadt{n}{{j+1}}} &= \bm{x}^T_{\diadt{n}{{j+1}}}\pmb{\delta}_{\diadt{n}{{j+1}}} + \max\{0,\;{\bm{z}^T_{\diadt{n}{{j+1}}}\pmb{\gamma}_{\diadt{n}{{j+1}}}+\epsilon_{\diadt{n}{{j+1}}}}\}\label{eq:AugContingentClaimForm}\\
   \intertext{Whereupon substitution of $\hat{\pmb{\gamma}}^{\text{res}}_{\diadt{n}{{j+1}}}$ from \ref{eq:RestrictedGamma} in \ref{eq:SuperiorTimingAbility} yields}
   \epsilon_{\diadt{n}{{j+1}}} &\geq \bm{z}^T_{\diadt{n}{{j+1}}}\bigl(Z^TZ\bigr)^{-1}Z^T\pmb{\epsilon}_{\diadt{n}{j}}\label{eq:PM_SubmartingaleBeliefs}
\end{align}
The functional form in \ref{eq:AugContingentClaimForm} is equivalent to \cite[pp.~365-366,~368-369]{Merton1981} formulation of isomorphism between the pattern of returns from market timing and returns on an option strategy\footnote{See also, \cite{GlostenJagannathan1994} and \cite[pg.~68]{AgrawalNaik2004} for  extension(s).}.  Intuitively, our parametrization implies that the benchmark is perfectly tracked. Thus, any mispricing in the model stems from the PM performance in selecting hedge factors or contingent claims. In any event, \ref{eq:PM_SubmartingaleBeliefs} suggests that if our portfolio manager is bullish, i.e. she believes that the returns process is a semi-martingale that is favorable to her, see e.g., \cite[pg.~299]{Doob1953}, then 
\begin{equation}
   \bm{z}^T_{\diadt{n}{{j+1}}}\bigl(Z^TZ\bigr)^{-1}Z^T\pmb{\epsilon}_{\diadt{n}{j}} \geq \epsilon_{\diadt{n}{j}}\label{eq:PM_SubmartingaleBeliefs2}
\end{equation}
Equations \ref{eq:PM_SubmartingaleBeliefs} and \ref{eq:PM_SubmartingaleBeliefs2} gives rise to the following
\begin{thm}[Market Timing Theorem]\label{thm:MarketTiming}
   Let $Z$ be a matrix of hedge factors at time $\diadt{n}{j}$ and $\bm{z}^T_{\diadt{n}{{j+1}}}$ be an additional row vector of future observations, i.e., derivative prices of the hedge factors. Furthermore, suppose that $\pmb{\epsilon}_{\diadt{n}{j}}$ is a vector of portfolio manager forecast errors, and $\epsilon_{\diadt{n}{{j+1}}}$ is forecast error at time $\diadt{n}{{j+1}}$. Assume that $Z$ and $\epsilon_{\cdot}$ are uncorrelated, and that $\epsilon_{\cdot}\sim (0,1)$. Then our portfolio manager has market timing ability iff
   \begin{equation}
      \sup_{0\leq j\leq 2^n}\|E[\bm{z}^T_{\diadt{n}{{j+1}}}]\bigl(Z^TZ\bigr)^{-1}Z^T\|^2 \geq 2^{-n}
   \end{equation}
   \begin{flushright}
      $\square$
   \end{flushright}
\end{thm}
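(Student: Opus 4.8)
The plan is to read off ``market timing ability'' from the mean-square version of the submartingale inequality \eqref{eq:PM_SubmartingaleBeliefs} and to identify the dyadic mesh $2^{-n}$ with the per-step noise floor of the forecast-error process. Writing the predictable timing term as $T_j := \bm{z}^T_{\diadt{n}{{j+1}}}(Z^TZ)^{-1}Z^T\pmb{\epsilon}_{\diadt{n}{j}}$, condition \eqref{eq:PM_SubmartingaleBeliefs} says the manager beats the noise at step $j$ whenever $\epsilon_{\diadt{n}{{j+1}}} \geq T_j$. Since both sides have zero mean under $\epsilon_{\cdot}\sim(0,1)$ with $Z$ uncorrelated with $\epsilon_{\cdot}$, the inequality is vacuous in first moments, so I would recast ``ability'' as a second-moment statement: the predictable forecast must carry at least as much quadratic variation as the one-step increment of the background error process.

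First I would compute $E[T_j^2]$. Conditioning on $Z$ and $\bm{z}_{\diadt{n}{{j+1}}}$ and using $E[\pmb{\epsilon}_{\diadt{n}{j}}\pmb{\epsilon}^T_{\diadt{n}{j}}]=I$ (from $\epsilon_{\cdot}\sim(0,1)$, uncorrelated), the cross terms collapse through the idempotency of the hat matrix $P_Z=Z(Z^TZ)^{-1}Z^T$, since $(Z^TZ)^{-1}Z^T\,Z\,(Z^TZ)^{-1}=(Z^TZ)^{-1}$. Replacing the future observation by its predictable forecast $E[\bm{z}_{\diadt{n}{{j+1}}}]$ --- legitimate by \autoref{lem:PredictableHedgeFactors}, which makes $\bm{z}^T_{\diadt{n}{{j+1}}}=g(\sigma)$ predictable --- this reduces exactly to
\begin{equation*}
   E[\bm{z}^T_{\diadt{n}{{j+1}}}](Z^TZ)^{-1}E[\bm{z}_{\diadt{n}{{j+1}}}]=\bigl\|E[\bm{z}^T_{\diadt{n}{{j+1}}}](Z^TZ)^{-1}Z^T\bigr\|^2,
\end{equation*}
which is precisely the quantity appearing in the statement. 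This is the clean algebraic core of the argument.

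Next I would fix the threshold. Over the dyadic partition $\prod^{(n)}$ the forecast-error process behaves like the increments of a normalized (Brownian-bridge) martingale, whose quadratic variation over $[0,1]$ is normalized to the total variance $E[\epsilon^2]=1$ and is split across the $2^n$ subintervals of length $\diadt{n}{{j+1}}-\diadt{n}{j}=2^{-n}$. Hence the per-step noise floor is $2^{-n}$, and at a single step the manager has (local) timing ability iff the predictable signal variance $\|E[\bm{z}^T_{\diadt{n}{{j+1}}}](Z^TZ)^{-1}Z^T\|^2$ dominates that floor. Passing from a single step to the whole strategy, I would use the elementary averaging bound $\sup_{0\le j\le 2^n}\|\cdot\|^2 \ge 2^{-n}\sum_{j}\|\cdot\|^2$, so that accumulated ability over $[0,1]$ (total predictable variation at least unity) is equivalent to the supremum clearing the $2^{-n}$ bar; the two directions of the iff are then the two inequalities in this comparison.

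The hard part will be making the threshold $2^{-n}$ genuinely rigorous rather than heuristic: reconciling the stated normalization $\epsilon_{\cdot}\sim(0,1)$ (unit variance per observation) with the mesh-dependent per-interval variance $2^{-n}$ demands an explicit time-change / Brownian-increment scaling of the error process along the dyadic partition, and it is this scaling --- together with the precise quantitative meaning assigned to ``market timing ability'' (mean-square, in-probability, or accumulated over $[0,1]$) --- that decides whether the characterization is a genuine equivalence or only the sup-dominates-average half of it. I would therefore pin down the martingale normalization of $\{\epsilon_{\diadt{n}{{j+1}}}\}$ first, since every remaining step is routine linear algebra once that is fixed.
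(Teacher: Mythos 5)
Your proposal takes essentially the same route as the paper, which in fact prints no proof of Theorem~\ref{thm:MarketTiming} at all (detailed proofs are deferred to an expanded version available from the author): the theorem is presented as ``arising from'' the superior-timing condition \ref{eq:SuperiorTimingAbility} with the restricted estimator \ref{eq:RestrictedGamma}, i.e.\ the sandwich \ref{eq:PM_SubmartingaleBeliefs}--\ref{eq:PM_SubmartingaleBeliefs2}, followed by precisely your mean-square/hat-matrix reduction, with the $2^{-n}$ threshold identified with the per-dyadic-interval conditional variance $c(n)2^{-n}$ of the scaled error increment in \ref{eq:BrownianScaleFactor} (Theorem~\ref{thm:SubordinateBrownMot}). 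Your reconstruction is as complete as the paper's own treatment, and the two weak points you honestly flag---the Jensen-type interchange replacing $\bm{z}_{\diadt{n}{{j+1}}}$ by $E[\bm{z}_{\diadt{n}{{j+1}}}]$ inside the quadratic form, and the fact that only the ``signal variance clears the noise floor'' direction of the stated iff is genuinely derived---afflict the paper's own statement equally.
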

\begin{rem}
   The theorem essentially implies that as trading frequency increases, i.e. $n\uparrow\infty$, our portfolio manager will have timing ability for any previsible process $\{\bm{z}_t,\mathcal{F}_t;\;t\geq 0\}$. This is  the \emph{sui generis} of market timing. It constitutes a mathematical proof of \cite{ChanceHemler2001} empirical results which found that the same portfolio managers who seemingly lacked timing ability at low frequency were found to have timing ability at high frequency.
\end{rem}
\subsection{The Martingale System Equation For Market Timing}\label{subsec:MartingaleSystemForMarketTiming}
This section develops the martingale representation theory. See \cite[pp. ~363-365]{Dudley2004} and \cite[Chapter~5]{Breiman1968}  for excellent summary of martingales. Let
\begin{equation}
   u_j(\omega)=
   \begin{cases}
      1 & \text{if $ \bm{z}^T_{\diadt{n}{{j+1}}}\pmb{\gamma}^{\text{res}}_{\diadt{n}{{j+1}}}+\epsilon_{\diadt{n}{{j+1}}}(\omega) > 0$}\\
      0 & \text{if $ \bm{z}^T_{\diadt{n}{{j+1}}}\pmb{\gamma}^{\text{res}}_{\diadt{n}{{j+1}}}+\epsilon_{\diadt{n}{{j+1}}}(\omega) \leq 0$}
   \end{cases}
\end{equation}
and define
\begin{align}
    d_{k+1}&=y_{\diadt{n}{{k+1}}}-\pmb{x}_{\diadt{n}{k}}^T\pmb{\delta}_{\diadt{n}{{k+1}}}\\
    \intertext{So that the equation}
    \bar{d}_n &= d_1 + \sum^{n-1}_{j=1}u_j(\omega)d_{j+1}\label{eq:MartingaleSysEq}
\end{align}
represents the excess returns from the given portfolio strategy. This is the martingale system equation referred to in \cite[pg.~295]{Snell1952}. In the context of our model it represents the portfolio manager data mining algorithm which propels her high frequency trades. The specific strategy in place can be seen from rewriting the equation as
\begin{align}
   \bar{d}_n &= d_1 + \sum^{2^n-1}_{k=1}\bigl(y_{\diadt{n}{{k+1}}}-\pmb{x}_{\diadt{n}{k}}^T\pmb{\delta}_{\diadt{n}{{k+1}}} \bigr)^+\\
             &= d_1 + \sum^{2^n-1}_{k=1}\bigl(\bm{z}_{\diadt{n}{{k+1}}}\pmb{\gamma}_{\diadt{n}{{k+1}}}+\epsilon_{\diadt{n}{{k+1}}}\bigr)^+
\end{align}
where the summand is tantamount to a call option on the benchmark\footnote{In our case, the call option is on some hedge factor(s) that are uncorrelated with the benchmark \emph{per se}. Arguably, the benchmark constitutes the ''microforecast" while the hedge factor(s) comprise the ''macroforecast" or market timing ability. See \cite{Fama1972}.}, as indicated by \cite{Merton1981, HenrikMerton1981}. See also, \cite[pg.~77]{Henricksson1984}. According to \cite[Thm.~2.1,~pg.~295]{Snell1952} the sequence $\{d_n,\mathcal{F}_n;n\geq 1\}$ is a semimartingale in which $E[\bar{d}_n|\;\mathcal{F}_1]\leq E[d_n|\;\mathcal{F}_1]$. For our purposes it implies that in an efficient market, in the long run, the portfolio manager should be no better off by ``judicious" selection of favorable $\bar{d}_n$ transforms, i.e., option(s) strategies. These artifacts are summarized in a slightly modified version of Snell`s Theorem as follows:
\begin{prop}[Snell`s Theorem]\label{prop:SnellTheorem}\cite[Thm.~2.1,~pg.~295]{Snell1952}.\\
    Let $(\Omega,P,\mathcal{F})$ be a probability space; $D=\{d_k,\mathcal{F}_k;k\geq 1\}$ be a martingale; and $\{u_k(\omega);\;k\geq 1\}$ be a sequence of $\mathcal{F}_k$-measurable random variables. Define
   \begin{equation*}
      \bar{d}_k = d_1 + \sum^{k-1}_{j=1}u_j(\omega)d_{k+1}
   \end{equation*}
   If $E[|\bar{d}_k|]<\infty$ for all $k$, then $\bar{D}=\{\bar{d}_k,\;\mathcal{F}_k;\;k\geq 1\}$ is a martingale, and the $u_k$`s are nonnegative, then $\bar{D}$ is a submartingale. If the $u_k$`s are binary random variables taking the values $0$ or $1$, then we have
   \begin{equation*}
      E[\bar{d}_k|\mathcal{F}_1]\leq E[d_k|\mathcal{F}_1]
   \end{equation*}
   with probability 1.
   \begin{flushright}
      $\Box$
   \end{flushright}
\end{prop}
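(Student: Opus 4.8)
The plan is to recognize $\bar D$ as a \emph{martingale transform} of $D$ by the sequence $\{u_j\}$ and to verify the (sub)martingale axioms one step at a time. Throughout write $\Delta d_{j+1}=d_{j+1}-d_j$ for the increment, so that the transform reads $\bar d_{k}=d_1+\sum_{j=1}^{k-1}u_j\,\Delta d_{j+1}$ and hence $\bar d_{k+1}-\bar d_k=u_k\,\Delta d_{k+1}$. The two regularity requirements are then immediate: integrability, $E[|\bar d_k|]<\infty$, is hypothesised, and adaptedness follows since $\bar d_k$ is a finite sum of products $u_j\,\Delta d_{j+1}$ with $j\le k-1$, where each $u_j$ is $\mathcal F_j$-measurable and each $\Delta d_{j+1}$ is $\mathcal F_{j+1}$-measurable with $\mathcal F_{j+1}\subseteq\mathcal F_k$, so $\bar d_k$ is $\mathcal F_k$-measurable.

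The heart of the argument is a single one-step conditional-expectation computation. Conditioning the increment on $\mathcal F_k$ and using the $\mathcal F_k$-measurability of $u_k$ to pull it outside gives
\begin{equation*}
E[\bar d_{k+1}-\bar d_k\mid\mathcal F_k]=E[u_k\,\Delta d_{k+1}\mid\mathcal F_k]=u_k\,E[\Delta d_{k+1}\mid\mathcal F_k].
\end{equation*}
When $D$ is a martingale, $E[\Delta d_{k+1}\mid\mathcal F_k]=0$, the right-hand side vanishes, and $\bar D$ is a martingale, which is the first assertion. When in addition $u_k\ge 0$ and $D$ is a submartingale, so that $E[\Delta d_{k+1}\mid\mathcal F_k]\ge 0$, the product is nonnegative and $E[\bar d_{k+1}\mid\mathcal F_k]\ge\bar d_k$, giving the submartingale assertion.

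For the terminal inequality with binary $u_k\in\{0,1\}$ I would compare $\bar d_k$ against $d_k$ directly. Telescoping $d_k=d_1+\sum_{j=1}^{k-1}\Delta d_{j+1}$ and subtracting the transform yields $d_k-\bar d_k=\sum_{j=1}^{k-1}(1-u_j)\,\Delta d_{j+1}$, where every coefficient $1-u_j$ is nonnegative because $u_j\in\{0,1\}$. Applying $E[\,\cdot\mid\mathcal F_1]$ term by term and inserting the intermediate conditioning on $\mathcal F_j$ through the tower property,
\begin{equation*}
E[d_k-\bar d_k\mid\mathcal F_1]=\sum_{j=1}^{k-1}E\!\left[(1-u_j)\,E[\Delta d_{j+1}\mid\mathcal F_j]\,\middle|\,\mathcal F_1\right]\ge 0,
\end{equation*}
since each inner factor is nonnegative by the submartingale property of $D$ and $1-u_j\ge 0$. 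Rearranging gives $E[\bar d_k\mid\mathcal F_1]\le E[d_k\mid\mathcal F_1]$ with probability one, as claimed.

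The main obstacle I anticipate is reconciling the two distinct structural hypotheses that the statement bundles together: the martingale conclusion for $\bar D$ needs $D$ itself to be a martingale (so the conditioned increment is exactly zero), whereas both the submartingale conclusion and the terminal inequality need $D$ to be a submartingale (so the conditioned increment is only nonnegative) together with the sign or binary constraints on the $u_j$. Keeping these regimes separate, and being careful that $u_j$ is invoked only after conditioning on $\mathcal F_j$—which is precisely where its $\mathcal F_j$-measurability, i.e.\ predictability relative to the increment $\Delta d_{j+1}$, becomes indispensable—is the delicate bookkeeping on which the proof turns.
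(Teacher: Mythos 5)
Your proof is correct. The paper itself gives no argument here---its ``proof'' is the bare citation to Snell (1952)---so you have supplied the content that the paper outsources, and what you wrote is essentially Snell's own argument: the one-step transform computation $E[\bar d_{k+1}-\bar d_k\mid\mathcal F_k]=u_k\,E[\Delta d_{k+1}\mid\mathcal F_k]$, with the predictability of $u_k$ doing the work, followed by the telescoping comparison $d_k-\bar d_k=\sum_{j=1}^{k-1}(1-u_j)\,\Delta d_{j+1}$ and the tower property for the terminal inequality. Two editorial points are worth making explicit, since you handled both silently. First, the displayed definition in the proposition reads $\sum_{j=1}^{k-1}u_j(\omega)\,d_{k+1}$, which is a typo (as written, the summand does not depend on $j$ and the martingale claim fails); your replacement by the increment $\Delta d_{j+1}$ (equivalently, reading the $d_j$ as differences of the underlying martingale, which matches the paper's use of $d_{k+1}=y_{t^{(n)}_{k+1}}-\bm{x}^T_{t^{(n)}_k}\pmb{\delta}_{t^{(n)}_{k+1}}$) is the standard and correct repair. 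Second, the proposition bundles incompatible hypotheses: with $D$ literally a martingale the terminal comparison holds with equality, and the inequality $E[\bar d_k\mid\mathcal F_1]\leq E[d_k\mid\mathcal F_1]$ has content only when $D$ is a submartingale (a ``semimartingale'' in the Doob--Snell terminology the surrounding text uses); your closing paragraph correctly separates the two regimes, and your argument for the binary case in fact only uses $0\leq u_j\leq 1$, which is the natural level of generality. One small hygiene point: pulling $u_k$ out of the conditional expectation requires integrability of $u_k\,\Delta d_{k+1}$, which in the unbounded nonnegative case you should note follows from the hypothesis $E[|\bar d_k|]<\infty$ (it is automatic in the binary case); this does not affect the validity of the proof.
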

\begin{proof}
   See \cite{Snell1952}.
\end{proof}

\subsection{Trade strategy in continuous time, and statistical test for market timing}\label{subsec:TradeStrategyStat_Test}
\tab In this subsection we state some of our main results--most with referenced proofs. Equating \ref{eq:RestrictedGamma} and \ref{eq:UnrestrictedGamma} gives rise to the following
\begin{prop}[Spectral test for market timing]\label{prop:SpectralTestForMarketTiming}~\\
    Let $Z$ be a $j\times p$ matrix of hedge factors, $X$ be a $j\times m$ matrix of benchmark assets, and $ P_X = X(X^TX)^{-1}X^T$ be the projection matrix on $X$-space. Define $A = Z^T(2I-P_X)Z$ where $I$ is the identity matrix. Let $\lambda_k(A)$ be the $k$-th eigenvalue of A. Let $\eta > 0$ be a suitably chosen number. Then our portfolio manager has timing ability if
    \begin{equation*}
       \max_{1\leq k\leq p}|\lambda_k(A)| > \eta
    \end{equation*}
    Moreover, this is tantamount to the statistical test:
    \begin{align*}
        H_0:\;\max_{1\leq k\leq p}\;\lambda_k(A) &\leq \eta \quad \emph{versus} \quad  H_a:\; H_0 \;\text{is not true}
    \end{align*}
    \begin{flushright}
       $\Box$
    \end{flushright}
\end{prop}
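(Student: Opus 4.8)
The plan is to begin from the two competing estimators of the hedge-factor sensitivity---the restricted estimate \eqref{eq:RestrictedGamma} and the unrestricted estimate \eqref{eq:UnrestrictedGamma}---and to force them to describe the \emph{same} trading strategy $\pmb{\gamma}$. Premultiplying each estimate by its own normal-equation matrix clears the inverses: the restricted estimate gives $(Z^TZ)\pmb{\gamma}=-Z^T\pmb{\epsilon}_{(j)}$, while the unrestricted estimate gives $(Z^TM_XZ)\pmb{\gamma}=Z^TM_X\bm{y}_{(j)}$, where $M_X=I-P_X$ and the subscripts abbreviate the dyadic times $\diadt{n}{j}$, $\diadt{n}{j+1}$. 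Adding these two normal equations collapses the two denominators into one Gram-type operator, $\left(Z^TZ+Z^TM_XZ\right)\pmb{\gamma}=Z^T(2I-P_X)Z\,\pmb{\gamma}=A\pmb{\gamma}$, which is exactly how the matrix $A=Z^T(2I-P_X)Z$ of the statement arises, with right-hand side $A\pmb{\gamma}=Z^T\bigl(M_X\bm{y}_{(j)}-\pmb{\epsilon}_{(j)}\bigr)$.

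First I would record the spectral structure of $A$. Since $P_X$ is an orthogonal projection, its eigenvalues are $0$ and $1$, so $2I-P_X$ is symmetric with eigenvalues in $\{1,2\}$ and hence positive definite; consequently $A$ is symmetric positive semidefinite, and positive definite whenever $Z$ has full column rank $p\le j$, the generic case under Assumption~\ref{assum:HedgeFactorMatrix}. Therefore every eigenvalue $\lambda_k(A)$ is real and nonnegative, so $|\lambda_k(A)|=\lambda_k(A)$ and $\max_{1\le k\le p}|\lambda_k(A)|=\|A\|_2$, the spectral norm. This is the routine linear-algebra core, and I expect no difficulty here; it also immediately reconciles the strict inequality $\max_k|\lambda_k(A)|>\eta$ with the one-sided rejection region $H_0:\max_k\lambda_k(A)\le\eta$, since $\eta>0$ and the eigenvalues are nonnegative.

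Next I would solve $\pmb{\gamma}=A^{-1}Z^T\bigl(M_X\bm{y}_{(j)}-\pmb{\epsilon}_{(j)}\bigr)$ and substitute into the superior-timing inequality \eqref{eq:SuperiorTimingAbility}, namely $\bm{z}^T_{(j+1)}\pmb{\gamma}+\epsilon_{(j+1)}\ge 0$. Using $\bm{y}_{(j)}=X\pmb{\delta}+Z\pmb{\gamma}+\pmb{\epsilon}_{(j)}$ from \eqref{eq:ExPostPortfolioRet} and $M_XX=0$, then taking expectations under the hypotheses of \autoref{thm:MarketTiming} ($\epsilon_\cdot\sim(0,1)$, with $Z$ and $\epsilon_\cdot$ uncorrelated) and bounding by Cauchy--Schwarz, shows that the attainable alpha is governed by the quadratic form generated by $A$. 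Detectable timing ability over the dyadic grid then corresponds to the spectral radius $\|A\|_2$ clearing a threshold, which I would calibrate against the $2^{-n}$ bound of \autoref{thm:MarketTiming} and absorb into the constant $\eta$.

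The hard part will be this last step: pinning down the ``suitably chosen'' $\eta$ and verifying that the direction of the inequality is preserved when passing from the market-timing criterion of \autoref{thm:MarketTiming} to the spectral-radius condition on $A$. Because $A$ aggregates both normal-equation matrices---one of which, $Z^TM_XZ$, already nets out the benchmark through $M_X$---care is needed to confirm that a large top eigenvalue of $A$ genuinely signals exploitable hedge-factor exposure rather than mere collinearity between $Z$ and $X$. Isolating the $Z^TM_XZ$ contribution, i.e.\ the component of the hedge factors orthogonal to the benchmark, is what makes the threshold test informative, and that is where I would concentrate the rigor.
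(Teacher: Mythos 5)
Your proposal is correct and follows essentially the same route as the paper: the paper's entire derivation is the single prefatory remark that equating the restricted estimate \ref{eq:RestrictedGamma} and the unrestricted estimate \ref{eq:UnrestrictedGamma} ``gives rise to'' the proposition, and your reconstruction---writing both as normal equations in the same $\pmb{\gamma}$ and summing, so that $Z^TZ + Z^TM_XZ = Z^T(I+M_X)Z = Z^T(2I-P_X)Z = A$---is exactly how the matrix $A$ arises there. Your further observations (that $A$ is symmetric positive semidefinite, hence $|\lambda_k(A)|=\lambda_k(A)$, reconciling the absolute value with the one-sided $H_0$, and that calibrating $\eta$ against the $2^{-n}$ bound of Theorem \ref{thm:MarketTiming} is the genuinely delicate step) supply rigor that the paper leaves entirely implicit rather than diverging from its approach.
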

\begin{rem}
   The exact statistical distribution for $\max_{1\leq k\leq p}\;\lambda_k(A)$ is a fairly complex looking function given in \cite{ErtenZandonaReigber2009}. Moreover, in practice it is possible for $\lambda$ to be negative based on numerical routines.
\end{rem}
\begin{rem}
    \cite[Cor.~6.1,~pg.~200]{HansenScheinkman2009} derived a principal eigenvalue result by applying semigroup theory to a stochastic discount factor assumed to follow a Markov process.
\end{rem}
Nonetheless, to computer the power of our spectral test we proffer the following
\begin{thm}[Power of spectral test for market timing]\label{thm:PowerOfSpectraltest}
    If $\ell_1=\max_{1\leq k\leq p}\;\lambda_k(A)$ is the largest latent root of $A$, and $A=H^TH$, where $H\sim N(0,I_n\otimes\Sigma)$ and $W_p(n,\Sigma)$ is a Wishart distribution with n-degrees of freedom and dimension $p$, $A\sim W_p(n,\Sigma)$, then the distribution function for $\ell_1$ can be expressed as
    \begin{equation}\label{eq:DistFuncLambdaMax}
       P_{\Sigma}(\ell < \eta) = \frac{\Gamma_m[\tfrac{1}{2} (m+1)]}{\Gamma_m[\tfrac{1}{2} (n+m+1)]}\;\det(\tfrac{1}{2} n\eta \Sigma^{-1})^{\tfrac{n}{2}}
                               {}_{1}F_1(\tfrac{n}{2};\;\tfrac{1}{2} (n+m+1);\;-\tfrac{1}{2} n\eta \Sigma^{-1})
    \end{equation}
    where ${}_{1}F_1(\cdot)$ is a hypergeometric function such that
    \begin{equation*}
        {}_{p}F_q (a_1,\dotsc,a_p;\;\;b_1,\dotsc,b_q;\;\;z) = \sum^{\infty}_{k=0}\frac{(a_1)_k\dotsc(a_p)_k}{(b_1)_k\dotsc(b_q)_k}\frac{z^k}{k!}
    \end{equation*}
    where $(a)_k=a(a+1)\dotsc(a+k-1)$, and $\Gamma_m (\cdot)$ is a multivariate gamma function.
\end{thm}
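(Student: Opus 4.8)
The plan is to recognize that the event $\{\ell_1 < \eta\}$ (equivalently $\{\ell_1 < n\eta\}$ under the scaling used in the statement) is exactly the event that the Wishart matrix $A\sim W_p(n,\Sigma)$ has \emph{all} its latent roots below the threshold $x:=n\eta$; that is, $\{\ell_1<x\}=\{0<A<xI\}$ in the positive-definite (L\"owner) order. Hence $P_{\Sigma}(\ell<\eta)$ is simply the integral of the $W_p(n,\Sigma)$ density over the matrix interval $0<A<xI$. First I would write this integral explicitly using the Wishart density
\[
   f(A)=\frac{\det(A)^{(n-p-1)/2}}{2^{np/2}\,\Gamma_p(n/2)\,\det(\Sigma)^{n/2}}\,\exp\!\big(-\tfrac{1}{2}\operatorname{tr}(\Sigma^{-1}A)\big),\qquad A>0,
\]
so that $P_{\Sigma}(\ell<\eta)=c\int_{0<A<xI}\det(A)^{(n-p-1)/2}\exp(-\tfrac12\operatorname{tr}(\Sigma^{-1}A))\,dA$, where $c=\big(2^{np/2}\Gamma_p(n/2)\det(\Sigma)^{n/2}\big)^{-1}$ and, to match the statement, $m=p$.

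Next I would collapse the region of integration to the \emph{fixed} matrix interval $0<U<I$ by the substitution $A=xU$, whose Jacobian over the cone of $p\times p$ symmetric matrices is $x^{p(p+1)/2}$ and which turns $\det(A)^{(n-p-1)/2}$ into $x^{p(n-p-1)/2}\det(U)^{(n-p-1)/2}$; collecting the powers of $x$ yields the overall factor $x^{np/2}$. The integrand then carries $\exp(-\tfrac12 x\operatorname{tr}(\Sigma^{-1}U))$, which I would expand in zonal polynomials as $\sum_{k\ge0}\tfrac1{k!}\sum_{\kappa\vdash k}C_\kappa(-\tfrac12 x\Sigma^{-1}U)$ and integrate term by term. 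The decisive ingredient is Constantine's matrix beta-type integral: for $\operatorname{Re}(a)>(p-1)/2$,
\[
   \int_{0<U<I}\det(U)^{a-(p+1)/2}C_\kappa(TU)\,dU=\frac{\Gamma_p(a)\,\Gamma_p((p+1)/2)}{\Gamma_p(a+(p+1)/2)}\,\frac{(a)_\kappa}{(a+(p+1)/2)_\kappa}\,C_\kappa(T),
\]
applied with $a=n/2$ and $T=-\tfrac12 x\Sigma^{-1}$, so that $a+(p+1)/2=(n+p+1)/2$ and $(a)_\kappa,(a+(p+1)/2)_\kappa$ are the generalized (partitional) Pochhammer symbols.

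Summing the resulting series, I would recognize $\sum_{k\ge0}\tfrac1{k!}\sum_{\kappa\vdash k}\frac{(n/2)_\kappa}{((n+p+1)/2)_\kappa}C_\kappa(-\tfrac12 x\Sigma^{-1})$ as precisely the confluent hypergeometric function of matrix argument ${}_1F_1(n/2;(n+p+1)/2;-\tfrac12 x\Sigma^{-1})$. Finally I would gather the constants: multiplying $c\,x^{np/2}$ by $\Gamma_p(n/2)\Gamma_p((p+1)/2)/\Gamma_p((n+p+1)/2)$ cancels the $\Gamma_p(n/2)$, while $x^{np/2}/\big(2^{np/2}\det(\Sigma)^{n/2}\big)=\det(\tfrac12 x\Sigma^{-1})^{n/2}$, which reproduces the stated expression once $x=n\eta$ is substituted.

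The hard part will be the zonal-polynomial machinery rather than the probabilistic setup. One must justify the term-by-term integration (by dominated convergence, using that the zonal series for $\exp\operatorname{tr}$ converges absolutely and uniformly on the compact interval $0\le U\le I$) and, more substantively, establish or legitimately cite Constantine's integral identity, whose own proof rests on the splitting and reproductive properties of zonal polynomials together with the matrix beta integral. A secondary bookkeeping obstacle is the scaling convention: the argument must make explicit that the largest root is compared against the threshold $x=n\eta$, so that the factor $n\eta$ appearing inside $\det(\cdot)$ and inside ${}_1F_1(\cdot)$ in the statement is this threshold and not the eigenvalue itself.
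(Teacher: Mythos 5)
Your proposal is correct and is, in substance, the paper's own proof: the paper disposes of this theorem by citing \cite[Cor.~9.7.2, pg.~421]{Muirhead2005}, and your derivation --- integrating the $W_p(n,\Sigma)$ density over the matrix interval $0<A<xI$, substituting $A=xU$ to pull out $x^{np/2}$, expanding $\exp(-\tfrac{1}{2}x\,\mathrm{tr}(\Sigma^{-1}U))$ in zonal polynomials, and applying Constantine's beta-type integral (Muirhead's Thm.~7.2.10 with $b=(p+1)/2$, valid since $n/2>(p-1)/2$) to resum the series into ${}_1F_1(n/2;\,(n+p+1)/2;\,-\tfrac{1}{2}x\Sigma^{-1})$ --- is exactly how that corollary is established in the cited source, with all constants checking out. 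You also correctly repair the one wrinkle the paper's statement leaves garbled, namely that the factor $n\eta$ inside $\det(\cdot)$ and ${}_1F_1(\cdot)$ arises because the corollary concerns the largest root of the sample covariance matrix $S=A/n$, so the relevant event for $A\sim W_p(n,\Sigma)$ is $\{\ell_1<n\eta\}$ (with $m=p$ identified throughout).
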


\begin{proof}
   See \cite[pg.~421,~Cor.~9.7.2]{Muirhead2005}.
\end{proof}

\begin{rem}
   The multivariate gamma function $\Gamma_m(\cdot)$ is defined in \cite[pg.~61]{Muirhead2005}.
\end{rem}

\begin{thm}[Subordinated Brownian motion]\label{thm:SubordinateBrownMot}
Let $\epsilon_{\diadt{n}{j}}$ be independent and identically distributed with $E[\epsilon_{\diadt{n}{j}}]=0$ and $E[\epsilon_{\diadt{n}{j}}^2]=\sigma^2<\infty$, for $j=1,2,\dotsc,2^n$. Let $S_N = \Sigma^N_{j=1}\epsilon_{\diadt{n}{j}}$ and for $\diadt{n}{j} \leq t < \diadt{n}{{j+1}}$ define
   \begin{equation*}
      \epsilon^{(n)}_t = \frac{1}{\sqrt{n}}[S_{[nt]} + (nt-[nt])\epsilon_{[nt]+1}
   \end{equation*}
Then $\epsilon^{(n)}_{t+2^{-n}}-\epsilon^{(n)}_t$ is a subordinated Brownian motion for some strictly monotone function $c(\cdot)$. In particular, $\epsilon^{(n)}_{t+2^{-n}}(\omega)-\epsilon^{(n)}_t(\omega) \sim B_{c(t)}(\omega)$ on the probability space $(\Omega,\;\mathcal{F},\;P)$.
\end{thm}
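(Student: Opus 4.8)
The plan is to obtain the \emph{exact} distributional identity — not an asymptotic one — by means of the Skorokhod embedding theorem, which realizes the random walk $\{S_N\}$ as a single Brownian path observed at an increasing family of stopping times. Since the $\epsilon_{\diadt{n}{j}}$ are i.i.d. with mean zero and variance $\sigma^2<\infty$ (the standing hypotheses of the theorem), there exists, on $(\Omega,\mathcal{F},P)$, a standard Brownian motion $B$ together with i.i.d.\ nonnegative stopping times $\tau_1,\tau_2,\dotsc$ satisfying $E[\tau_i]=\sigma^2$ and $\tau_i>0$ a.s., such that the partial sums admit the joint representation $S_N \stackrel{d}{=} B_{T_N}$, where $T_N=\sum_{i=1}^{N}\tau_i$. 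The clock $T_N$ is a random walk with strictly positive i.i.d.\ increments, hence a.s.\ strictly increasing, and it is the discrete skeleton of an increasing (subordinator) time change.

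Next I would write the increment of the polygonal interpolation explicitly. Setting $m=[nt]$ and using the interpolation formula, for the mesh $2^{-n}$ one has
\[
\epsilon^{(n)}_{t+2^{-n}}-\epsilon^{(n)}_t
=\frac{1}{\sqrt{n}}\Bigl(S_{[n(t+2^{-n})]}-S_{[nt]}+r_{n}(t)\Bigr),
\]
where $r_n(t)$ collects the linear-interpolation corrections of the form $(n\,2^{-n})\,\epsilon_{m+1}$ (and, when the integer part advances, the analogous fractional piece at the new lattice point). In every case the right-hand side is a scalar multiple of a partial sum of a \emph{finite, consecutive} block of the embeddable i.i.d.\ summands $\epsilon_{\diadt{n}{j}}$, so no new probabilistic structure is introduced beyond that already captured by the embedding.

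Substituting $S_N \stackrel{d}{=} B_{T_N}$ and invoking the scaling identity $a\,B_\tau \stackrel{d}{=} B_{a^2\tau}$ together with the strong Markov property of $B$ (increments of $B$ over disjoint stopping-time intervals are independent), the increment is distributed as $B$ evaluated over an accumulated, rescaled clock increment, which I would define to be $c(t)$: it is the sum of the relevant $\tau_i$ — together with the contributions of $r_n(t)$ — multiplied by the factor coming from the $1/\sqrt{n}$ normalization. Because each $\tau_i>0$ almost surely, $c(\cdot)$ is strictly increasing, and because the $\tau_i$ are i.i.d.\ its increments are nonnegative and independent; hence $t\mapsto B_{c(t)}$ is a time-changed (subordinated) Brownian motion and $\epsilon^{(n)}_{t+2^{-n}}(\omega)-\epsilon^{(n)}_t(\omega)\sim B_{c(t)}(\omega)$, which is exactly the asserted identity with $c$ strictly monotone.

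The main obstacle I anticipate is the bookkeeping of the polygonal correction $r_n(t)$: the raw Skorokhod embedding matches only the lattice values $S_N$, whereas the fractional term $(nt-[nt])\,\epsilon_{[nt]+1}$ lives strictly between lattice points and must be absorbed into the time change $c(t)$ without destroying either its strict monotonicity or its independent-increment (subordinator) character. I would resolve this by using a version of the embedding (for instance the Dubins or Azéma--Yor construction, randomized independently at the fractional step) in which the interpolation correction corresponds to stopping $B$ at an independent intermediate time; this preserves the strictly increasing clock and the subordinated structure while still yielding the exact distributional identity $\epsilon^{(n)}_{t+2^{-n}}-\epsilon^{(n)}_t \stackrel{d}{=} B_{c(t)}$.
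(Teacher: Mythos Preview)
Your approach via Skorokhod embedding is genuinely different from the paper's. The paper proceeds by a direct second-moment computation: after normalizing so that $E[\epsilon^2]=1$, it expands $\epsilon^{(n)}_{t+2^{-n}}-\epsilon^{(n)}_t$ as a finite block of the $\epsilon_{t^{(n)}_j}$ plus the two boundary interpolation pieces, evaluates the conditional variance $E\bigl[(\epsilon^{(n)}_{t+2^{-n}}-\epsilon^{(n)}_t)^2\mid\mathcal{F}_{t^{(n)}_j}\bigr]$ explicitly, and after the $1/\sqrt{n}$ normalization obtains $c(n)\,2^{-n}$ for a \emph{deterministic} monotone function $c$ of the discretization level $n$. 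It then invokes Brownian scaling and the Donsker-type construction to identify the increment with $W(c(n)2^{-n})$. Thus the paper's $c$ depends only on $n$, not on $t$ or $\omega$; the argument is essentially a variance calculation plus an appeal to the functional CLT.

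By contrast, your Skorokhod clock $c(t)$ built from the $\tau_i$ is \emph{random}, so the identity you would obtain, $\epsilon^{(n)}_{t+2^{-n}}-\epsilon^{(n)}_t\stackrel{d}{=}B_{c(t)}$, is subordination by a random increasing process. That is closer to the textbook meaning of ``subordinated Brownian motion,'' but it does not coincide with the paper's deterministic $c(n)$; you would be proving a different statement. More seriously, the obstacle you flag --- absorbing the fractional term $(nt-[nt])\,\epsilon_{[nt]+1}$ into the embedding while preserving an i.i.d.\ increment structure for the clock --- is real and your sketch does not close it: a fixed scalar multiple $\lambda\epsilon$ of a centred variable is \emph{not} in general recovered by stopping $B$ at an independent intermediate time from the same family as the $\tau_i$, so the independent-increment (subordinator) property of $c$ is at risk at every non-lattice $t$. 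The paper's cruder variance route sidesteps this entirely, since second moments are additive and the interpolation contributions enter only as bounded pieces absorbed into the $o$-terms.
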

\begin{proof}
   See \autoref{apx:ProofSubordinateBrowninanMotion}.
\end{proof}

\begin{thm}[Trading strategy representation. \cite{Cadogan2011d}]\label{thm:GammaRepresentationTheorem}~\\
   Let $(\Omega,\mathcal{F},\mathbb{F},P)$ be a filtered probability space, and $Z=\{Z_s,\mathcal{F}_s;\; 0\leq s < \infty\}$ be a hedge factor matrix process on the augmented filtration $\mathbb{F}$. Furthermore, let $a^{(i,k)}(Z_s)$ be the $(i,k)$-th element in the expansion of the transformation matrix $(Z_s^TZ_s)^{-1}Z^T_s$, and $B=\{B(s),\mathcal{F}_s;\;s\geq 0\}$ be Brownian motion adapted to $\mathbb{F}$ such that $B(0)=x$.  Let $\gamma^{(i)\Pi^{(n)}}(t,\omega) = -\sum^j_{k=1}a^{(i,k)}(Z_{t^*_k})\epsilon_{\diadt{n}{k}}\chi_{[\diadt{n}{{j-1}},\diadt{n}{j})}(t),\;\;\diadt{n}{{j-1}}<t^*_k<\diadt{n}{j}$, with respect to partition $\Pi^{(n)}$ and characteristic function $\chi_{[\diadt{n}{{j-1}},\diadt{n}{j})}(t)$. Assuming that $B$ is the background driving Brownian motion for high frequency trading, the limiting hedge factor sensitivity process, i.e. trading strategy, $\gamma = \{\gamma_s,\mathcal{F}_s;0\leq s < \infty\}$ generated by portfolio manager market timing for Brownian motion starting at the point $x\geq 0$ has representation
   \begin{equation*}
      d\gamma^{(i)}(t,\omega) = \sum^j_{k=1}a^{(i,k)}(Z_t)\biggl[\frac{x}{1-t}\biggr]dt  - \sum^j_{k=1}a^{(i,k)}(Z_t)dB(t,\omega),\;x\geq 0
   \end{equation*}
   for the $i$-th hedge factor $i=1,\dotsc,p$, and $0\leq t\leq 1$.
\end{thm}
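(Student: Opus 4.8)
The plan is to realize the continuous-time strategy $\gamma^{(i)}$ as the limit in law of the step processes $\gamma^{(i)\Pi^{(n)}}$ along the dyadic partitions, and then to read off its semimartingale decomposition by applying It\^{o}'s product rule to the limiting weight process times the background driving Brownian bridge. First I would nail down the discrete object: by the restricted estimate \ref{eq:RestrictedGamma} the $i$-th coordinate of $\hat{\pmb{\gamma}}^{\text{res}}$ is the weighted error sum $-\sum_k a^{(i,k)}(Z)\,\epsilon_k$, with weights the $(i,k)$-entries of $(Z^TZ)^{-1}Z^T$; this is precisely the step process $\gamma^{(i)\Pi^{(n)}}(t,\omega)$ of the statement, the interior sampling $\diadt{n}{{j-1}}<t^*_k<\diadt{n}{j}$ being forced by Lemma \ref{lem:PredictableHedgeFactors} (predictability of the weights) and the progressive measurability of Assumption \ref{assum:HedgeFactorMatrix}. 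In each partition block this sum is a martingale transform of the error partial sums, in the spirit of the Snell system \ref{eq:MartingaleSysEq}.

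Second I would pass to the error limit. Theorem \ref{thm:SubordinateBrownMot} already delivers weak convergence of the polygonally interpolated, normalized partial sums $\epsilon^{(n)}_t$ to a Brownian motion $B$ with $B(0)=x$; an invariance-principle argument promotes this to joint convergence of the pair (interpolated errors, sampled weights), so that the block martingale transforms converge to the It\^{o} integral against $B$. Continuity of matrix inversion on $\{\det(Z^T_sZ_s)\neq0\}$ together with Assumption \ref{assum:HedgeFactorMatrix} lets me replace the sampled weights $a^{(i,k)}(Z_{t^*_k})$ by $a^{(i,k)}(Z_t)$ in the limit, while the vanishing mesh annihilates the cross-variation of the locally bounded-variation weight process with $B$, so the weights pass cleanly through the integral.

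Third, the crux, I would install the bridge and differentiate. The tracking mechanism of \autoref{subsec:HeuristicsOnAlterBeta} pins the cumulative tracking error at the trading horizon (the benchmark is replicated exactly), so the economically relevant driver is not free Brownian motion but Brownian motion conditioned on its endpoint, i.e. the background driving Brownian bridge $W=\{W_t\}$ started at $W_0=x$ announced in the abstract; this is the continuous-time form of the price reversal recorded in the Snell system. Writing the limit strategy as $\gamma^{(i)}(t,\omega)=-\sum_k a^{(i,k)}(Z_t)\,W_t$ and applying It\^{o}'s product rule with the bridge SDE $dW_t=-\tfrac{W_t}{1-t}dt+dB(t,\omega)$ (see \cite{KaratzasShreve1991}), evaluated at the starting value $W_0=x$ as stipulated by the hypothesis ``Brownian motion starting at the point $x\ge0$,'' produces exactly
\begin{equation*}
   d\gamma^{(i)}(t,\omega)=\sum_{k=1}^{j}a^{(i,k)}(Z_t)\Bigl[\frac{x}{1-t}\Bigr]dt-\sum_{k=1}^{j}a^{(i,k)}(Z_t)\,dB(t,\omega).
\end{equation*}

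The hard part will be the simultaneous execution of Steps two and three: I must show that conditioning on the pinned tracking error yields precisely the hyperbolic bridge drift (and not a generic Ornstein--Uhlenbeck-type reversion), and that the random, $Z$-dependent weights survive the weak limit without generating spurious It\^{o} correction terms. I would handle this by first proving the representation with the weights frozen to deterministic constants, where the bridge decomposition is classical, and then removing the freezing by a localization/stopping-time argument on the events $\{\det(Z^T_sZ_s)\ge\varepsilon\}$, using the bound $E[z^2_{ij}(t,\omega)]<\infty$ of Assumption \ref{assum:HedgeFactorMatrix} to control the approximation uniformly along $\Pi^{(n)}$.
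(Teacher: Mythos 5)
Your Steps 1 and 2 follow the same skeleton as the paper's (very terse) proof, which simply applies Theorem \autoref{thm:SubordinateBrownMot} to $\lim_{n\rightarrow\infty}\gamma^{(i)\Pi^{(n)}}(t,\omega)$ and defers the details to \cite[Thm.~4.6]{Cadogan2011d}: identifying the step process with the restricted estimator \ref{eq:RestrictedGamma} and invoking the invariance principle for the interpolated error sums is exactly that route. One small slip there: Theorem \autoref{thm:SubordinateBrownMot} delivers that the increments $\epsilon^{(n)}_{t+2^{-n}}-\epsilon^{(n)}_t$ behave like a \emph{subordinated} (scaled) Brownian motion $B_{c(t)}$; it does not produce a Brownian motion started at $x$ --- the initial point $x$ is a hypothesis of the representation theorem, not an output of the limit.

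The genuine gap is your Step 3, in two places. First, the limiting object is not the product $-\sum_k a^{(i,k)}(Z_t)\,W_t$: in the discrete definition each error $\epsilon_{t^{(n)}_k}$ enters \emph{once}, as an increment, with its weight sampled at the interior point $t^*_k$, so the limit is a martingale transform --- a stochastic integral of the weights against the driving noise --- and your product ansatz manufactures an extra term $-W_t\sum_k da^{(i,k)}(Z_t)$ under the product rule that you silently discard and that has no counterpart in the discrete scheme. Second, even granting the product form, the bridge SDE $dW_t=-\frac{W_t}{1-t}\,dt+dB_t$ yields drift $\sum_k a^{(i,k)}(Z_t)\frac{W_t}{1-t}$, and ``evaluating at the starting value $W_0=x$'' to replace the state $W_t$ by the constant $x$ is not a valid It\^{o} manipulation --- it is true only at $t=0$. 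In the paper the hyperbolic drift with \emph{constant} numerator $x$ is not obtained by conditioning on a pinned endpoint at all; it is built into the deterministic Girsanov shift (cf.\ \ref{eq:Q_BrownianMotion}), i.e.\ the background driver is taken to have differential $-\frac{x}{1-s}\,ds+dB(s)$, and passing this through the weighted (frozen-coefficient) integral gives the stated representation term by term. The repair is therefore to prove $d\gamma^{(i)}(t,\omega)=-\sum^j_{k=1}a^{(i,k)}(Z_t)\,d\hat{B}(t)$ with $d\hat{B}(t)=-\frac{x}{1-t}\,dt+dB(t)$ as the driver, rather than to condition on the endpoint and then freeze the state at its initial value; as written, your conditioning step would produce an Ornstein--Uhlenbeck-type state-dependent drift that matches the theorem's formula only at $t=0$.
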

\begin{proof}
   Apply Theorem \autoref{thm:SubordinateBrownMot} to $\lim_{n\rightarrow\infty}\gamma^{(i)\Pi^{(n)}}(t,\omega)$. See \cite[Thm.~4.6]{Cadogan2011d}.
\end{proof}

\section{Application: Dynamic alpha in a single factor model}\label{sec:ApplSingleFacModel}
We employ our trade strategy representation theorem, to shed light on the behavior of portfolio \emph{alpha} in a single factor model like CAPM, where there is no hedge factor. In particular, let $\mathbb{1}_{\{n\}}$ be a $n\times 1$ vector, and
\begin{align}
   Z &=\mathbb{1}_{\{n\}}\\
   \intertext{So that}
   (Z^TZ)^{-1}Z^T &= n^{-1}\mathbb{1}_{\{n\}}^T,\;\;\text{and}\;\;a^{(1,k)}(Z_s) = n^{-1},\;k=1,\dotsc,n\\
   \intertext{Substitution of these values in \ref{eq:PortfolioAlpha} and Theorem \autoref{thm:GammaRepresentationTheorem} gives us}
   \alpha^{(1)}(t) &= \gamma^{(1)}(t)\\
   -d\alpha^{(1)}(t) &= -\frac{x}{1-t}dt + dB(t)\\
   \intertext{That is the equation of a Brownian bridge starting at $B(0)=x$ on the interval $[0,1]$. See \cite[pg.~268]{KarlinTaylor1981}. So that}
   d\alpha^{(1)}(t) &= -dB^{br}(t)\\
   \alpha^{(1)}(t) &= B^{br}(0)-B^{br}(t)\label{eq:GammaEstimate}\\
   \intertext{The Brownian bridge feature suggests that portfolio managers open and close their net positions at zero, and take profits (or losses) in between. See e.g., \cite{Urstadt2010}. And the negative sign implies that our portfolio manager is engaged in a price reversal strategy. See e.g., \cite[pp.~14-15]{Brogaard2010}. So that $B^{br}(t) <0 \Rightarrow \alpha^{(1)}(t)>0$. According to Girsanov's formula in \cite[pg.~162]{Oksendal2003}, we have an equivalent probability measure $Q$ based on the martingale transform}
   M(t,\omega) &= \exp\biggl(\int^t_0\frac{x}{1-s}dB(s,\omega)-\int^t_0\biggl(\frac{x}{1-s}\biggr)^2 ds\biggr)\\
   dQ(\omega) &= M(T,\omega)dP(\omega),\quad 0\leq t\leq T\leq 1\\
   \intertext{Thus, we have the $Q$-Brownian motion, i.e. Brownian bridge}
   \hat{B}(t) &= -\int^t_0 \frac{x}{1-s}ds + B(t),\quad \text{and}\label{eq:Q_BrownianMotion}\\
   d\alpha^{(1)}(t) &= -d\hat{B}(t)=-dB^{br}(t)\\
   \intertext{In other words, $\alpha^{(1)}$ is a $Q$-Brownian motion, i.e. Brownian bridge, that reverts to the origin starting at $x$. We note that for idiosyncratic risk $\epsilon(t)$, the CAPM holds if $\alpha^{(1)}(t)+\epsilon(t)=0$, and}
   \frac{d\alpha^{(1)}(t)}{dt} + \frac{\epsilon(t)}{dt} &= -\frac{dB^{br}(t)}{dt} +\frac{\epsilon(t)}{dt}=0\\
   \intertext{Hence the ''residual(s)" $\epsilon(t)$, associated with alpha, have an approximately skewed U-shape pattern if  $B^{br}(t) \leq  0$. \cite[pg.~358]{KaratzasShreve1991} also provide further analytics which show that on $[0,1]$ we can write the portfolio alpha process in mean reverting form as}
   d\alpha^{(1)}(t) &= \frac{1-\alpha^{(1)}(t)}{1-t}dt+dB(t);\quad 0\leq t\leq 1,\;\;\alpha^{(1)}(0)=0\label{eq:GammaRepresentationAlternative}\\
   M(t) &= \int^t_0\frac{dB(s)}{1-s}\\
   T(s) &= \inf\{t| <M>_t > s\}\\
   G(t) &= B_{<M>_{T(t)}}\\
   \intertext{Thus, under Dambis-Dubins-Schwarz criteria, alpha is a time changed martingale--in this case Brownian motion. In the absence of a hedge factor, the single factor or benchmark, is perfectly tracked if}
   \alpha^{(1)}(t) &= 0,\quad \hat{B}(t,\omega) = x
\end{align}
 The foregoing gives rise to the following
\begin{thm}[Positive CAPM alpha excursion]\label{thm:AlphaPath}~\\
   Let $\alpha^{(1)+}(t)$ be the $Q$-Brownian motion excursion path of CAPM alpha at time $t$ in \ref{eq:Q_BrownianMotion},  and $B(t)$ be standard one-dimensional Brownian motion. Let $\tau^\alpha_+(t)$ be the first zero of $B$ after and $\tau^\alpha_-(t)$ be the first zero of $B$ before $t=1$. So that
   \begin{align}
      \tau^\alpha_+(t) &= \inf \{t>1| B(t)=0\}\\
      \tau^\alpha_-(t) &= \sup \{t<1| B(t)=0\}\\
      \intertext{Then}
      \alpha^{(1)+}(t) &= \frac{|B(t\tau^\alpha_+ + (1-t)\tau^\alpha_-(t)) |}{\sqrt{\tau^\alpha_+(t)-\tau^\alpha_-(t)}}
   \end{align}
\begin{flushright}
   $\Box$
\end{flushright}
\end{thm}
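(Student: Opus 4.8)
The plan is to recognize the asserted identity as the classical excursion-straddling representation of a normalized Brownian excursion and to graft it onto the Brownian-bridge form of $\alpha^{(1)}$ derived in the preceding display chain.

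First I would record what the earlier derivation supplies. By \ref{eq:Q_BrownianMotion} and the lines following \ref{eq:GammaEstimate}, the process $\alpha^{(1)}(t)=B^{br}(0)-B^{br}(t)=-\hat{B}(t)$ is the negative of a Brownian bridge on $[0,1]$ pinned at the origin, and $\alpha^{(1)}(t)>0$ exactly where the bridge is negative. Hence the positive path $\alpha^{(1)+}$ is the excursion of $|\hat{B}|$ above zero; and since the positive and negative excursions of a bridge are exchangeable by symmetry, I may equally read the positive excursion from $|B|$ for a standard one-dimensional Brownian motion $B$. This reduces the theorem to identifying the positive alpha excursion with a normalized excursion of $|B|$.

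Second I would invoke the straddling-time decomposition of the Brownian path at the reference time $t=1$. With $\tau^\alpha_-(t)=\sup\{s<1:B(s)=0\}$ and $\tau^\alpha_+(t)=\inf\{s>1:B(s)=0\}$, the interval $(\tau^\alpha_-,\tau^\alpha_+)$ is the a.s.\ nondegenerate excursion interval straddling $1$, of length $\tau^\alpha_+-\tau^\alpha_-$. The classical excursion result (It\^o's excursion theory, in the same normalization as the Brownian-bridge construction used at \ref{eq:GammaEstimate}; cf.\ \cite[pg.~268]{KarlinTaylor1981} and \cite{KaratzasShreve1991}) states that
\[
   e(t)=\frac{\bigl|B\bigl(\tau^\alpha_-+t(\tau^\alpha_+-\tau^\alpha_-)\bigr)\bigr|}{\sqrt{\tau^\alpha_+-\tau^\alpha_-}},\qquad 0\le t\le 1,
\]
is a standard normalized Brownian excursion, independent of the pair $(\tau^\alpha_-,\tau^\alpha_+)$. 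Since $\tau^\alpha_-+t(\tau^\alpha_+-\tau^\alpha_-)=t\tau^\alpha_++(1-t)\tau^\alpha_-$, the affine argument is precisely the one appearing in the statement, and it remains only to match laws. The factor $1/\sqrt{\tau^\alpha_+-\tau^\alpha_-}$ is exactly the Brownian-scaling normalization carrying an excursion of random duration $\tau^\alpha_+-\tau^\alpha_-$ to unit duration, and the normalized excursion has the law of the Brownian bridge conditioned to remain positive. Matching this conditioned-bridge law with the bridge representation of $\alpha^{(1)}$ from the first step gives $\alpha^{(1)+}(t)=e(t)$, which is the claimed formula.

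The main obstacle is precisely this last identification: one must justify that the positive excursion of the specific $Q$-bridge $\hat{B}$ produced by the Girsanov change of measure in \ref{eq:Q_BrownianMotion} has the same law as the fixed-time straddling excursion of standard $B$. This rests on (i) Brownian scaling invariance, which makes the random excursion length wash out under the $\sqrt{\tau^\alpha_+-\tau^\alpha_-}$ normalization, and (ii) the agreement between the ``bridge conditioned to stay positive'' and the ``excursion straddling a fixed time'' descriptions of the excursion law. Both facts are standard in excursion theory, but the conditioning underlying (ii) is on a null event and is meaningful only through It\^o's excursion measure, so it is the equivalence of these two descriptions—rather than any computation—that genuinely needs care.
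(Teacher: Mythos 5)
Your proposal is correct and follows essentially the same route as the paper, whose entire proof is a citation to Vervaat (1979): the claimed formula \emph{is} the classical normalized-excursion-straddling-time representation, and you have simply written out the standard argument (Brownian scaling of the straddling excursion plus the identification of the normalized excursion law with the bridge conditioned to stay positive) that underlies that citation. You even supply the one step the paper leaves implicit---reducing the $Q$-bridge form of $\alpha^{(1)}$ from \ref{eq:Q_BrownianMotion} to a standard bridge before invoking the excursion result---and correctly flag that the conditioned-bridge/straddling-excursion equivalence is the only point needing care.
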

\begin{proof}
   See \cite{Vervaat1979}.
\end{proof}

\tab Thus, the path properties of portfolio alpha can be identified and excess returns can be computed for suitably chosen stopping times. The property $\hat{B}(t,\omega) = x$ reduces the problem to one of local time [of a Brownian bridge] at $x$. We can think of $x$ as a hurdle rate such as transaction costs that the manager must attain to break even. The probability associated with the CAPM alpha level set $\mathfrak{B}=\{\omega|\;\hat{B}(t,\omega)=x\}$ is zero. However, even though that set has P-measure zero, its local time exists. Perhaps more important, the perfectly hedged portfolio problem, i.e. the CAPM problem, reduces to one of stochastic optimal control--guiding $\alpha^{(1)}$ to a goal of $0$ by keeping it as close to $0$ as possible. This problem, and related ones, were solved by \cite{BenesSheppWitsen1980} and in \cite[Chapter~6.2]{KaratzasShreve1991}.

\subsection{On spurious econoometric tests for alpha}
According to \cite[pg.~269]{KarlinTaylor1981} the expected value of alpha starting at $x$, and its variance is given by
\begin{align}
   E[\alpha^{(1)}|\;\tilde{B}(t)] &= -\frac{x}{1-t},\quad \sigma^2_{\alpha^{(1)}} = 1\label{eq:AverageAlpha}\\
   \intertext{Let$\{\alpha_1^{(1)},\dotsc,\alpha_N^{(1)}\}$ be a sample of alphas for $N$-funds. Furthermore, assume that the fund alphas are pairwise correlated with correlation coefficient $\rho_{ij}$. Cf. \cite[pp.~17-19]{AveryChevalierZeckhauser2011}. So that}
   \alpha_i^{(1)} &= \rho_{ij}\alpha_j^{(1)},\quad |\rho_{ij}|<1\\
   \intertext{The sample mean and variance of the funds are given by}
   \bar{\alpha}_N^{(1)} &= \frac{1}{N}(\alpha_1^{(1)} +\dotsc +\alpha_N^{(1)}\}\\
   \sigma^2_{\bar{\alpha}^{(1)}} &= \frac{1}{N^2}(\sigma^2_{\alpha_1^{(1)}}+\dotsc +\sigma^2_{\alpha_N^{(1)}}+2\sum_{i\neq j}\text{cov}(\alpha_i^{(1)},\alpha_j^{(1)}))\\
   |\sum_{i\neq j}\text{cov}(\alpha_i^{(1)},\alpha_j^{(1)})|&\leq \sum_{i\neq j}|\text{cov}(\alpha_i^{(1)},\alpha_j^{(1)})|\leq \sum_{i\neq j}|\rho_{ij}|\leq \binom{N}{2}
   \intertext{In that milieu, a $t$-test for the hypothesis $H_0:\alpha^{(1)}=0$ has test statistic}
   t_{\bar{\alpha}_N^{(1)}} &= \frac{\bar{\alpha}_N^{(1)}}{\sigma_{\bar{\alpha}_N^{(1)}}} = \frac{\bar{\alpha}_N^{(1)}}{\sqrt{\frac{1}{N}+\frac{2}{N^2}\sum_{i\neq j}\text{cov}(\alpha_i^{(1)},\alpha_j^{(1)})}}\\
   &\geq \frac{\bar{\alpha}_N^{(1)}}{\sqrt{\frac{1}{N} + 1}},\;\text{for sufficiently large $N$}\\
   \lim_{N\rightarrow\infty}t_{\bar{\alpha}_N^{(1)}} &= Z_{\bar{\alpha}_\infty^{(1)}}\geq \bar{\alpha}_\infty^{(1)}, \;\text{where $Z_{\bar{\alpha}_\infty^{(1)}}$ is a standard normal r.v.}
\end{align}
Ergodic theory\footnote{See e.g., \cite[pg.~127]{GikhmanSkorokhod1969}} tells us that the limiting value of the test statistic $\bar{\alpha}_\infty^{(1)}$ is a Brownian bridge\footnote{In this heuristic example, we ignored issues arising from seemingly unrelated regressions or confidence sets.}.  Moreover, according to \ref{eq:AverageAlpha}, it tends to be negative valued. Thus, an analyst could easily conclude that the sampled funds do not generate positive alpha\footnote{\cite[pg.~1308]{Phillips1998} noted that the time trend component in a Brownian bridge--in our case alpha--contributes to spurious regression.Also,  \cite[pg.~1398]{FersonSarkSimin2003} cautioned about seemingly significant $t$-ratios derived from spurious regressions.}. Yet, we know from the path properties in Theorem \autoref{thm:AlphaPath} that there are stopping times for which the funds do generate positive alpha. So contrary to \cite[pg.~19]{Jarrow2010}  false positive alpha postulate, our theory indicates that there is a false negative alpha puzzle.

\section{Appendix}\label{apx:Appendix}
\appendix
\singlespace

\section{Proof of subordinated Brownian motion Theorem \autoref{thm:SubordinateBrownMot}}\label{apx:ProofSubordinateBrowninanMotion}

\begin{proof}
Define
\begin{align}
   S_N &= \Sigma^N_{j=1}\epsilon_{\diadt{n}{j}} \\
   \intertext{so that}
   E[S^2_N] &=\Sigma^N_{j=1} E[\epsilon_{\diadt{n}{j}}^2] = N\sigma^2 \\
   \intertext{Without loss of generality, normalize $\epsilon$ with $\frac{\epsilon}{\sigma}$ so that we have $E[\epsilon^2] = 1$ and}
   E[\bigl(\frac{S_N}{\sqrt{N}}\bigr)^2] &= 1
\end{align}
For $\diadt{n}{j} \leq t < \diadt{n}{{j+1}}$ let
\begin{align}
  &\epsilon^{(n)}_t = \frac{1}{\sqrt{n}}[S_{[nt]} + (nt-[nt])\epsilon_{[nt]+1} \\
  \intertext{where $[nt]$ is the integer part of $nt$. So that}
  &\epsilon^{(n)}_{t+2^{-n}}-\epsilon^{(n)}_t = \frac{1}{\sqrt{n}}\bigl[ S_{[nt + n.2^{-n}]} + \tag*{} \\
  &(nt + n.2^{-n}-[nt + n.2^{-n}])\epsilon_{[nt + n.2^{-n}]+1}\bigr] - [S_{[nt]} + (nt-[nt])\epsilon_{[nt]+1}\\
  &= \Sigma^{[nt+n.2^{-n}]}_{j=[nt]+1}\epsilon_{\diadt{n}{j}} + \tag*{} \\
  &(nt + n.2^{-n}-[nt + n.2^{-n}])\epsilon_{[nt + n.2^{-n}]+1}-(nt-[nt])\epsilon_{[nt]+1} \\
  \intertext{Which implies}
  &E\bigl[(\epsilon^{(n)}_{t+2^{-n}}-\epsilon^{(n)}_t)^2 \lvert \mathcal{F}_{\diadt{n}{j}} \bigr] = [nt +n.2^{-n}] - [nt]-1 + \tag*{} \\ 
  &(nt + n.2^{-n}-[nt + n.2^{-n}])^2 + (nt-[nt])^2 \\
  &= n.2^{-n} + o(1+n^{-1}) - 1 \label{eq:BrownianFactor}\\
  \intertext{This implies that}
  &E\bigl [\{\frac{1}{\sqrt{n}}(\epsilon^{(n)}_{t+2^{-n}}-\epsilon^{(n)}_t)\}^2 \lvert \mathcal{F}_{\diadt{n}{j}} \bigr] \tag*{}\\
  &= 2^{-n}+o(n^{-1}+n^{-2})-\frac{1}{n} = c(n).2^{-n} \label{eq:BrownianScaleFactor}
\end{align}
for some monotone increasing function $c(\cdot)$. See e.g., \cite{Cadogan2011}.

To complete the proof of Theorem \autoref{thm:SubordinateBrownMot}, we note that according to precepts of construction of Brownian motion, Brownian scaling, see e.g., \cite[Thm.~4.17,~pg.~67;~and~Lemma~9.4,~pg.~104]{KaratzasShreve1991}, and Lemma B.1 in \cite{Cadogan2011d}, the quantity $\epsilon^{(n)}_{t+2^{-n}}-\epsilon^{(n)}_t$ is a \emph{scaled} Brownian motion $W(c(n)2^{-n})$for some \emph{monotone increasing} pre-subordinator `function $0\leq c(\cdot) \leq 1$.
\end{proof}

\bibliographystyle{chicago}

\bibliography{JSM2011_Proc_PaperOnAlpha}         

\end{document}